\newcommand{\Cb}{\mathbb{C}}
\newcommand{\g}{\mathbf{g}}
\newcommand{\Hb}{\mathbf{H}}
\newcommand{\h}{\mathbf{h}}
\newcommand{\hh}{\widehat{h}}
\newcommand{\Rb}{\mathbb{R}}
\newcommand{\sigt}{\widetilde{\sigma}}
\newcommand{\w}{\mathbf{w}}
\newcommand{\x}{\mathbf{x}}
\newcommand{\y}{\mathbf{y}}
\newcommand{\z}{\mathbf{z}}
\newcommand{\0}{\mathbf{0}}
\newcommand{\mc}[1]{\mathcal{#1}}
\newcommand{\mb}[1]{\mathbb{#1}}
\newcommand{\mf}[1]{\mathbf{#1}}
\newtheorem{assum}{Assumption}
\newcommand{\The}{\mathbf{\Theta}}
\newcommand{\theb}{\boldsymbol{\theta}}
\newcommand{\phib}{\boldsymbol{\phi}}
\newcommand{\alg}{$\mathsf{ROAR-Fed}~$}
\newcommand{\algc}{$\mathsf{CHARLES}$}
\newcommand{\algns}{$\mathsf{ROAR-Fed}$}
\def\BibTeX{{\rm B\kern-.05em{\sc i\kern-.025em b}\kern-.08em
    T\kern-.1667em\lower.7ex\hbox{E}\kern-.125emX}}
\begin{document}

\title{ROAR-Fed: RIS-Assisted Over-the-Air Adaptive Resource Allocation for Federated Learning 
\thanks{This work is supported in part by NSF CNS-2112471.}
}

\author{Jiayu Mao and Aylin Yener
\\ INSPIRE@OhioState Research Center 
\\Dept. of Electrical and Computer Engineering
\\ The Ohio State University
\\ mao.518@osu.edu, yener@ece.osu.edu 
}

\newgeometry{left=0.625in, right=0.625in,top=0.75in,bottom=1in}

\maketitle

\begin{abstract}

Over-the-air federated learning (OTA-FL) integrates communication and model aggregation by exploiting the innate superposition property of wireless channels. The approach renders bandwidth efficient learning, but requires care in handling the wireless physical layer impairments.  In this paper, federated edge learning is considered for a network that is heterogeneous with respect to client (edge node) data set distributions and individual client resources, under a general non-convex learning objective. We augment the wireless OTA-FL system with a Reconfigurable Intelligent Surface (RIS) to enable a propagation environment with improved learning performance in a realistic time varying physical layer. Our approach is a cross-layer perspective that jointly optimizes communication, computation and learning resources, in this general heterogeneous setting. We adapt the local computation steps and transmission power of the clients in conjunction with the RIS phase shifts. The resulting joint communication and learning algorithm, RIS-assisted Over-the-air Adaptive Resource Allocation for Federated learning (ROAR-Fed) is shown to be convergent in this general setting. Numerical results demonstrate the effectiveness of ROAR-Fed under heterogeneous (non i.i.d.) data and imperfect CSI, indicating the advantage of RIS assisted learning in this general set up.
\end{abstract}

\begin{IEEEkeywords}
Reconfigurable Intelligent Surfaces (RIS), Federated Learning, Over-the-Air Computation, 6G 
\end{IEEEkeywords}

\vspace{0.1in}
\section{Introduction}
\label{sec:intro}
\vspace{0.1in}

In recent years, federated learning (FL) \cite{mcmahan2017} has received significant attention and found numerous applications, as a distributed machine learning framework.
FL involves an iterative training that is coordinated by a parameter server (PS) and a potentially large number of clients without sharing any of their data with the PS.
In each iteration, the clients train their local models using their individual datasets, send to PS, and the PS aggregates these local models to update the global model.
A naturally promising framework for mobile edge networks, care must be exercised when deploying FL in wireless networks, which are subject to mobile channels and limited resources. 

Over-the-air federated learning (OTA-FL)~\cite{amiri2020} provides a viable design for wireless FL by utilizing the inherent superposition property of the wireless medium. Specifically, OTA-FL proposes simultaneous analog transmissions by all participating clients for model updates over the wireless channel, which can lead to the PS directly receiving the aggregated model. Naturally, over-the-air model aggregation relies critically on the channel state information (CSI) at the transmitter.
In real systems, only estimated CSI would be available at the clients.
Imperfect CSI results in signal misalignment and can degrade the learning performance.
This paper explicitly considers a learning system under imperfect CSI.

In the context of smart and programmable radio environments, reconfigurable intelligent surfaces (RIS) \cite{wu2019intel} have emerged as a cost-effective technology to facilitate high reliability and spectral efficiency for the next generation, i.e, 6G. An RIS is typically a flat meta-surface consisting of a large number of reflecting elements, each of which is controlled to adjust phase shifts and (possibly) amplitude of the incident signal \cite{wu2019intel}.
With judicious deployment and alignment, RIS can construct the reflecting signals to desired directions, thus altering the propagation environment to a more favorable one. RIS has the potential to be integrated with edge learning to enhance the model aggregation stage, thereby boosting learning performance \cite{yang2020fed}. As such, several recent references considered RIS-augmented federated learning. In \cite{zheng2022balancing}, the mean-squared error (MSE) of the aggregated model is minimized by jointly optimizing beamformers and RIS phase shifts under both perfect CSI and imperfect CSI. \cite{liu2021csit} employs RIS to achieve OTA-FL model aggregation without CSI. \cite{wang2021fed} maximizes the number of scheduled devices under MSE constraints.
\cite{ni2021fed} adopts multiple RISs to further enhance model uploading and solves a MSE minimization problem.
In~\cite{li2022one}, RIS is applied to aid a one-bit communication FL system.
\cite{battiloro2022dynamic} jointly allocates communication and computation resources to minimize average power consumption in an RIS-assisted OTA-FL system.

More recently, unified communication and learning approaches are developed. Notably, \cite{liu2021risfl} formulates a unified communication-learning optimization problem to jointly design device selection, beamformer and RIS phase, but considers a static time-invariant channel with perfect CSI. In~\cite{zhao2022performance}, the optimality gap minimization problem of RIS-assisted FL is considered using a Lyapunov optimization framework.

Different than most existing works that focus on MSE minimization, in this paper, building on our previous work for OTA-FL without RIS assistance~\cite{mao22}, we develop a cross-layer algorithm that jointly optimally designs the communication and computation resources simultaneously to boost the learning performance in an RIS-assisted OTA-FL system. Different than existing joint communication and learning approaches, we consider the realistic time varying physical layer and imperfect CSI at the clients. We consider a general set up with a non-convex learning objective, and  heterogeneous client resources and local data distributions, aiming to demonstrate the improvement with the aid of even one RIS deployed between the clients and the PS. 
Specifically, we propose a joint communication and learning algorithm called~\alg (\underline{R}IS-assisted \underline{O}ver-the-air \underline{A}daptive \underline{R}esource Allocation for \underline{Fed}erated learning), which adapts the local update steps, transmit power and RIS phase shifts in concert in each global iteration to mitigate the impacts of both time-varying imperfect CSI and system/data heterogeneity.
We provide a convergence analysis of~\alg and observe that it achieves high test accuracy outperforming the state-of-the-art~\cite{liu2021risfl} with non-i.i.d. data and imperfect CSI.

\vspace{0.1in}
\section{System Model} \label{sec: prelim}
\vspace{0.1in}
\subsection{Federated Learning Model} 
\label{subsec: fl}
We consider a federated learning (FL) system consisting of a parameter server (PS) and $m$ clients. Client $i$ has local dataset $D_i$, sampled from distribution $\mc{X}_i$. FL minimizes the global empirical loss function by iterative collaborative training:
\begin{equation}
    \min_{\w \in \mathbb{R}^d}F(\w) \triangleq \min_{\w\in\mathbb{R}^d} \sum_{i \in [m]} \alpha_i F_i(\w, D_i), 
    \label{eq: objective}
\end{equation}
where $\w$ is the d-dimensional model, $\alpha_i = \frac{| D_i |}{\sum_{i \in [m]} | D_i |}$ is the model weight of  client $i$, $F_i(\w, D_i) \triangleq \frac{1}{| D_i |} \sum_{\xi^i_j \in D_i} F(\w, \xi^i_j)$ is the local objective, and $\xi^i_j$ is the $j$-th sample from $D_i$. $\mc{X}_i \neq \mc{X}_j$ if $ i \neq j, \forall i, j \in [m]$, i.e., we consider local datasets that are non-i.i.d., as is the case in practice.
We consider general non-convex objective functions $F_i(\w, D_i)$.
Clients in general have different volumes of training data, $\alpha_i \ne \alpha_j$ if $i\ne j$.

In FL, clients update their local models by optimizing $F_i$ and transmit them to the PS. PS aggregates the received local parameters and updates the global model accordingly. Note that in OTA-FL, aggregation and communication happen simultaneously due to the inherent superposition property of the wireless channel when all clients transmit their local model updates at the same time.
Once one communication round is completed, the PS broadcasts the current global model to the clients, and the next round starts.
When the global model converges, the training process concludes.

Specifically, in the $t$-th round, with global model $\w_t$, client $i$ computes its local gradient with its local dataset $D_i$, and performs stochastic gradient descent (SGD).
Each client $i$ trains for $\tau_t^i$ steps with an initialization of $\w^i_{t, 0} = \w_t$:
\begin{equation}
    \w^i_{t, k+1} = \w^i_{t, k} - \eta_t \nabla F_{i}(\w^i_{t, k}, \xi^i_{t, k}), \quad k = 0,\ldots,\tau_t^i-1, \label{equ:sgd}
\end{equation}
where $k$ denotes the local step and $\xi^i_{t, k}$ is the random data sample. 
The number of local steps $\tau_t^i$ varies each round and across clients, as in our previous works \cite{yang22,mao22}.

\subsection{RIS-Assisted Communication Model} 
\label{subsec: comm}
\begin{figure}[t] 
    \centering
    \includegraphics[scale=0.35]{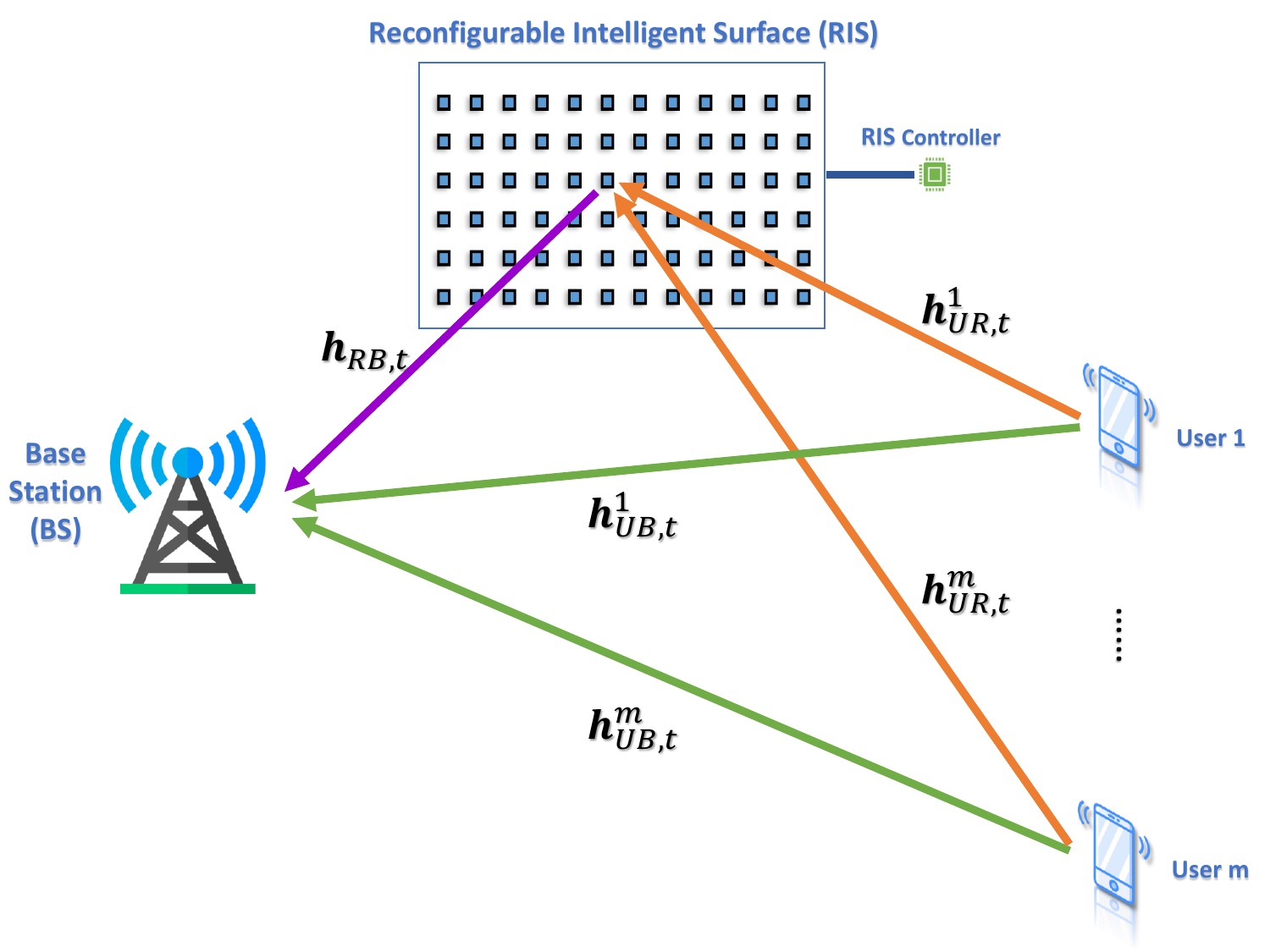}
    \caption{The RISs-assisted communication system.}
    \label{fig:sysmodel}
\end{figure}
We consider an RIS-assisted uplink communication model \footnote{Without loss of generality, we consider synchronous models.} as shown in Fig.~\ref{fig:sysmodel}, which has one RIS equipped with $N$ passive elements, $m$ single-antenna edge devices and a single-antenna base station/PS.  
The RIS is deployed between the users and the PS to aid the communication of local updates.
We assume that the direct links are weak, which renders the assistance of RIS essential.
We consider an error-free downlink, i.e., each client receives the global model perfectly, i.e., $\w^i_{t, 0} = \w_t, \forall i\in [m]$.
We assume the uplink channels follow a block fading model,
where channel coefficients remain constant for each communication round but vary independently from one round to another.
Let $\h_{UR,t}^{i} \in \Cb^N$, $\h_{RB,t} \in \Cb^N$, $h_{UB,t}^{i} \in \Cb$ denote the channels from user $i$ to RIS, from RIS to PS and from user $i$ to PS, respectively.
We represent RIS phase matrix in the $t$-th round as a diagonal matrix $\The_t = diag (\theta_{1,t}, \theta_{2,t},\cdots, \theta_{N,t})$, where $\theta_{n,t} = e^{j \phi_{n,t}}$ is the $n$-th continuous reflecting element.
Note that we update the RIS phase shifts in each global iteration.
The received signal $\y_t$ at the PS can be expressed as:
\begin{equation}
    \y_t = \sum_{i \in [m]} (h_{UB,t}^{i} + (\h_{UR,t}^{i})^H \The_t \h_{RB,t})\x^i_t + \z_t, \label{equ:receivesig}
\end{equation}
where $\x_t^i \in \mb{R}^d$ is signal from client $i$, $\z_t$ is the i.i.d. additive white Gaussian noise with zero mean and variance $\sigma_c^2$.
Define $\g_t^i$ as the cascaded user $i$-RIS-PS channel, i.e., $\g_t^i = ((\h_{UR,t}^{i})^H \Hb_{RB,t})^H \in \Cb^N$, where $\Hb_{RB,t} = diag(\h_{RB,t})$. 
Denote $\theb_t=(\theta_{1,t},...,\theta_{N,t})^T$ as the phase vector. 
Then, we can equivalently write the received signal as:
\begin{equation}
    \y_t = \sum_{i \in [m]} (h_{UB,t}^{i} +  (\g_t^i)^H \theb_t)\x^i_t + \z_t.
\end{equation}

The power constraint for user $i$ in $t$-th communication round is:
\begin{equation}
\mathbb{E}[\| \x^i_t \|^2] \leq P_t^i, \forall i \in [m], \forall t, \label{inequ:power}
\end{equation}
where $P_t^i$ is the maximum transmit power.
We assume estimated channel state information (CSI) at the clients.
We denote the estimated CSI of each path in iteration $t$ as $\hh_t$:
\begin{equation}
\hh_t = h_t + \Delta_t, \forall t,
\end{equation}
where $\Delta_t$ is the i.i.d. channel estimation error with zero mean and variance $\sigt_h^2$. All links have channel estimation error.

\vspace{0.1in}
\section{Joint Communication and Learning Design} 
\label{sec: alg}
\vspace{0.1in}

\begin{algorithm}[t!] 
    \caption{RIS-assisted over-the-air adaptive resource allocation for federated learning (ROAR-Fed)} \label{alg:apaf} 
    \begin{algorithmic}[1]
    \STATE 
    \emph{\bf Initialization: global model $\w_0$, $\theb_0$, $\beta_t^i$, $\tau_t^i, i \in [m]$.}
    \FOR{$t=0, \dots, T-1$}
    \STATE {Server first finds the client with the maximum number of local steps in round $t-1$, then applies SCA to compute the RIS phase update:} 
        \FOR{$j=0, \dots, J-1$}
        \STATE {Server updates RIS phase design $\phib_t$ by~\eqref{equ:phase}}.
        \ENDFOR
        \STATE {Server broadcasts the global model $\w_t$.}
        \FOR{each user $i \in [m]$}
        \STATE {Each client finds $\tau_t^i$ to satisfy the power constraint~\eqref{inequ:power} and trains local model by~\eqref{equ:sgd}.} 
        \STATE {Each user designs $\beta_t^i$ by~\eqref{equ:betaiimp} and transmits $\x_t^i$} by~\eqref{equ:transig}.  
        \ENDFOR
    \STATE {The server aggregates and updates global model by~\eqref{equ:globalup}.}
    \ENDFOR
    \end{algorithmic}
\end{algorithm}

In this section, we introduce a joint communication and learning design to enhance learning performance.
In each global iteration, we first update the RIS phase vector, then design the number of local steps (learning) to satisfy transmit power constraint (communication).
The overall procedure is summarized in Algorithm~\ref{alg:apaf}. 

\subsection{Dynamic Power Control}
\label{subsec:pc}
As in our previous works \cite{yang22,mao22}, we consider a dynamic power control (PC) scheme for both the server and clients.
During round $t$, user $i$ computes its signal $\x_t^i$ at the end of its local training, then transmits it to the server.
Denote $\beta_t^i$, $\beta_t$ as the adaptive PC scaling factor of user $i$ and PS, respectively.
Then the transmit signal $\x_t^i$ is designed as:
\begin{equation}
    \x_t^i = \beta_t^i (\w^i_{t, \tau_t^i} - \w^i_{t, 0}). \label{equ:transig}
\end{equation}
The PS scales the received signal~\eqref{equ:receivesig} with $\beta_t$.
Thereby, the global model is updated as:
\begin{align}
    \w_{t+1} &= \w_{t} + \frac{1}{\beta_t}\sum_{i=1}^{m} h_t^i \x_t^i  + \tilde{\z}_t, \label{equ:globalup}
\end{align}
where $\tilde{\z}_t \sim \mc{N}(\0, \frac{\sigma_c^2}{\beta_t^2} \mf{I}_d)$. 

We aim to mitigate the impact of channel fading with an appropriate PC scheme.
A well-known method is to invert the channel by CSI at the transmitter side.
For ease of notation, we use $h_t^i$ to represent the overall channel gain of client $i$ in the $t$-th round:
\begin{equation}
    h_t^i=h_{UB,t}^{i} +  (\h_{UR,t}^{i})^H \The_t \h_{RB,t}.
\end{equation}
Similarly, we use $\hh_t^i$ to represent the overall estimated CSI at the client $i$:
\begin{equation} \label{equ:csiest}
    \hh_t^i=\hh_{UB,t}^{i} +  (\hat{\h}_{UR,t}^{i})^H \The_t \hat{\h}_{RB,t}.
\end{equation}
We first consider perfect CSI.
For user $i$, we have:
\begin{equation} \label{equ:betaiperf}
    \beta_t^i = \frac{\beta_t \alpha_i}{\tau^i_t h_t^i},
\end{equation} 
\begin{equation} \label{inequ:betai}
    3\eta_t^2 \beta_t^i \tau_t^i G^2 \leq P_t^i,
\end{equation}
where $G$ is the bound of the stochastic gradient, defined in Assumption~\ref{a_bounded} in Sec.~\ref{sec: conv}. 
By~(\ref{equ:betaiperf}), the design fully offsets the impact of fading and exploits local computation resources by dynamic local steps. 
(\ref{inequ:betai}) can also facilitate finding the RIS phase update in round $t$, which will be explained in the next subsection.
In addition, this design criteria contributes to the convergence of~\alg, which is illustrated in Sec.~\ref{sec: conv}.
Finally, once the phase design is completed, each client selects $\tau_t^i$ by plugging~(\ref{equ:betaiperf}) and~(\ref{equ:transig}) into the transmit power constraint, as shown in Algorithm~\ref{alg:apaf}.

We now discuss power control for imperfect CSI.
We use the estimated channel information in~(\ref{equ:csiest}) for $\beta_t^i$:
\begin{equation} \label{equ:betaiimp}
    \beta_t^i = \frac{\beta_t \alpha_i}{\tau^i_t \hh_t^i}.
\end{equation}
Criterion~(\ref{inequ:betai}) still holds.
Note that imperfect CSI will cause signal misalignment in aggregation at the server in each iteration, thus its accumulated impact will degrade the learning performance.
We will analyze it in Sec.~\ref{sec: conv}.

\subsection{Phase Design}
\label{subsec:phase}
As mentioned in Sec.~\ref{subsec:pc}, we update the RIS phase shifts according to~(\ref{inequ:betai}).
However, there is only a single RIS to assist the communication, while the constraint~(\ref{inequ:betai}) is for each user.
Thus, in each round, we select the client with the maximum number of local steps from the previous round and update the RIS phase according to its design rule.
First, we plug~(\ref{equ:betaiperf}) into~(\ref{inequ:betai}) and obtain the inequality below:
\begin{equation} \label{inequ:phase}
    (\g_t^i)^H \theb_t \geq \frac{3 \eta_t^2 \beta_t \alpha_i G^2}{P_t^i} - h_{UB,t}^i, 
\end{equation}
where $\g_t^i$ and $\theb_t$ are cascaded RIS-assisted channel and phase vector, respectively, defined in Sec.~\ref{subsec: comm}.
To get the desired $\theb_t$, we formulate the phase design problem as:
\begin{equation} \label{prob:phase}
    \begin{aligned} 
        & \mathop{min}\limits_{\theb_t} \quad \|(\g_t^i)^H \theb_t - \frac{3 \eta_t^2 \beta_t \alpha_i G^2}{P_t^i} + h_{UB,t}^i \|_2^2  \\
        &\begin{array}{ll}
        s.t. & |\theta_{t,n}|=1 , \quad n=1,...,N.
        \end{array}
    \end{aligned}
\end{equation}

The problem in~(\ref{prob:phase}) is non-convex due to the constraint on RIS elements.
To tackle this challenge, we approximately solve it by finding a stationary solution via 
successive convex approximation (SCA)~\cite{scutari2013,mao22papa}.
The principle of SCA is to iteratively solve a sequence of simpler convex approximation problems.
Particularly, the surrogate functions are required to be strongly convex and differentiable~\cite{scutari2013}.

We define the objective function and expand it as follows:
\begin{equation} \label{equ:phasefunc}
\begin{array}{ll}
     f(\theb_t) & = || s_t^i - (\g_t^i)^H \theb_t||_2^2   \\
     & = (s_t^i)^* s_t^i - 2 Re \{ \theb_t^H \textbf{v}\} + \theb_t^H \textbf{U} \theb_t ,
\end{array}
\end{equation}
where $s_t^i = \frac{3 \eta^2 \beta_t \alpha_i G^2}{P_t^i} - h_{UB,t}^i$, $\textbf{v} = s_t^i \g_t^i$, $\textbf{U} = \g_t^i (\g_t^i)^H$.
Then we replace each phase element as $\theta_{n,t} = e^{j \phi_{n,t}}, \phi_{n,t} \in \Rb$.
Note that $s_t^i$ is a constant, hence it is equivalent to minimize
\begin{equation}
    f_1(\phib_t) = (e^{j\phib_t})^H \textbf{U} e^{j\phib_t} - 2 Re\{(e^{j\phib_t})^H \textbf{v}\},
\end{equation}
where $\phib_t = (\phi_{1,t},...,\phi_{N,t})^T$.

We then employ the SCA method. 
We set the surrogate function of $f_1(\phib_t)$ via the second order Taylor expansion at point $\phib_i^j$ in iteration $j$:
\begin{equation}
\begin{array}{ll}
      g(\phib_t,\phib^j_t) & = f_1(\phib^j_t) + \nabla f_1(\phib^j_t)^T (\phib_t - \phib^j_t)\\
     &  + \frac{\lambda}{2} ||\phib_t - \phib^j_t||_2^2,
\end{array}
\end{equation}
where $\nabla f_1(\phib^j_t)$ is the gradient, and $\lambda$ is selected to satisfy the requirement of the surrogate function, i.e, $g(\phib_t,\phib^j_t) \geq f_1(\phib_t) $. 
As such, the update rule of $\phib_t$ is:
\begin{equation}
    \phib_t^{j+1} = \phib^j_t - \frac{\nabla f_1(\phib^j_t)}{\lambda}. \label{equ:phase}
\end{equation}
When the SCA process completes, we obtain the phase update $\theb_t= e^{j\phib_t}$. 

We summarize our joint communication and learning design in Algorithm~\alg.
The design maintains the advantages of previous work~\cite{mao22}, where dynamic power control allows users to simultaneously meet communication constraints and engage in the training process.
Additionally, we integrate the RIS phase design with the learning procedure in each iteration to enhance the overall learning performance under both perfect and imperfect CSI cases.

\vspace{0.1in}
\section{Convergence Analysis}
\label{sec: conv}
\vspace{0.1in}

We first make the following assumptions on loss function:
\begin{assum}(Gradient is $L$-Lipschitz Continuous) \label{a_smooth}
	There exists a constant $L > 0$, such that $ \| \nabla F_i(\w_1) - \nabla F_i(\w_2) \| \leq L \| \w_1 - \w_2 \|$, $\forall \w_1, \w_2 \in \mathbb{R}^d$, and $i \in [m]$.
\end{assum}

\begin{assum}(Unbiased Local Stochastic Gradients and Bounded Variance) \label{a_unbias}
	The local stochastic gradient is unbiased and has a bounded variance, i.e.,
	$\mathbb{E} [\nabla F_i(\w, \xi_i)] = \nabla F_i(\w)$, $\forall i \in [m]$, and $\mathbb{E} [\| \nabla F_i(\w, \xi_i) -  \nabla F_i(\w) \|^2] \leq \sigma^2$, where $\xi_i$ is a random sample in $D_i$ and the expectation is with respect to the local data distribution $\mc{X}_i$.
\end{assum}

\begin{assum}(Bounded Stochastic Gradient) \label{a_bounded}
	There exists a constant $G \geq 0$, such that $\mathbb{E} [\| \nabla F_i(\w, \xi_i) \|^2] \leq G^2$, $\forall i \in [m]$. That is, the norm of each local stochastic gradient is bounded.
\end{assum}

With Assumptions~\ref{a_smooth}-~\ref{a_bounded}, we provide the convergence analysis of \alg as follows:

\begin{restatable}[Convergence Rate of \algns] {theorem} {convergence} \label{thm:convergence}
    Denote $\{ \w_t \}$ as a global model parameter.
    With Assumptions~\ref{a_smooth}-~\ref{a_bounded}, a constant learning rate $\eta_t = \eta  \leq \frac{1}{L}$, and $P_t^i=P_i, \forall t \in [T]$, we have:
    \begin{multline}
        \min_{t \in [T]} \mb{E} \| \nabla F(\w_t) \|^2 \leq \underbrace{\frac{2 \left(F(\w_0) - F(\w_{*}) \right)}{T \eta}}_{\mathrm{optimization \, error}} + \underbrace{\frac{L \sigma_c^2}{ \eta \beta^2}}_{\substack{\mathrm{channel \,
        noise} \\ \mathrm{error}}}   \nonumber \\
        + \underbrace{ \frac{2 m L^2}{9 \eta^2 G^2}  \sum_{i=1}^m  \frac{(\alpha_i)^2 P_i^2}{(\beta_i^2)} }_{\mathrm{local \, update \, error}} + \underbrace{  L \eta \sigma^2 \frac{1}{T} \sum_{t=0}^{T-1} \sum_{i=1}^{m} \alpha_i^2 \mb{E}_t  \bigg\| \frac{h_t^i}{\hh_t^i} \bigg\|^2}_{\mathrm{statistical \, error}} \nonumber \\
        + \underbrace{2 m G^2  \frac{1}{T} \sum_{t=0}^{T-1} \sum_{i=1}^m (\alpha_i)^2 \mb{E}_t  \bigg\| 1 - \frac{h_t^i}{\hh_t^i} \bigg\|^2}_{\mathrm{channel \, estimation \, error}} \nonumber ,
    \end{multline}
    where $\frac{1}{\beta_i^2} = \frac{1}{T} \sum_{t=0}^{T-1} \frac{1}{(\beta_t^i)^2}$ and $\frac{1}{\bar{\beta}^2} = \frac{1}{T} \sum_{t=0}^{T-1} \frac{1}{\beta_t^2}$.
\end{restatable}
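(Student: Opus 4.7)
The plan is to follow the standard non-convex optimization template for SGD-type algorithms via the descent lemma from Assumption~\ref{a_smooth}, then carefully account for five distinct error sources that arise once the specific OTA aggregation and power-control design are substituted into the global update. Starting from $F(\w_{t+1}) \leq F(\w_t) + \langle\nabla F(\w_t), \w_{t+1}-\w_t\rangle + \frac{L}{2}\|\w_{t+1}-\w_t\|^2$ and substituting the aggregation rule~\eqref{equ:globalup} with the transmit signal~\eqref{equ:transig} and the power-control choice~\eqref{equ:betaiimp}, the global update becomes $\w_{t+1}-\w_t = -\eta\sum_i\alpha_i\,(h_t^i/\hh_t^i)\,\frac{1}{\tau_t^i}\sum_{k=0}^{\tau_t^i-1}\nabla F_i(\w_{t,k}^i,\xi_{t,k}^i)+\tilde{\z}_t$, which exposes the three nuisance factors to be controlled: the channel-estimation ratio $h_t^i/\hh_t^i$, the client drift between $\w_{t,k}^i$ and $\w_t$, and the additive aggregation noise $\tilde{\z}_t$ with variance $\sigma_c^2/\beta_t^2$.

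For the inner-product term I would use the identity $\nabla F(\w_t)=\sum_i\alpha_i\nabla F_i(\w_t)$, take conditional expectation with respect to the SGD samples via Assumption~\ref{a_unbias}, and then add and subtract $\sum_i\alpha_i\nabla F_i(\w_t)$ inside the descent direction. This isolates a clean $-\eta\|\nabla F(\w_t)\|^2$ descent piece; the remainder splits into a client-drift piece (bounded using $L$-smoothness applied to $\nabla F_i(\w_{t,k}^i)-\nabla F_i(\w_t)$) and a channel-mismatch piece (bounded via Assumption~\ref{a_bounded} applied to the factor $h_t^i/\hh_t^i-1$), both of which are absorbed through Young's inequality against half of the descent term. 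For the squared-norm term, the inequality $\|\sum_i x_i\|^2\leq m\sum_i\|x_i\|^2$ together with independence of the aggregation noise, SGD draws, and channel across clients separates it into three contributions that produce, respectively, the channel-noise error $L\sigma_c^2/(\eta\bar{\beta}^2)$ after time-averaging, the statistical error $L\eta\sigma^2\cdot\mathbb{E}_t\|h_t^i/\hh_t^i\|^2$ from Assumption~\ref{a_unbias}, and the channel-estimation error $2mG^2\,\mathbb{E}_t\|1-h_t^i/\hh_t^i\|^2$ from Assumption~\ref{a_bounded}.

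The local-update error is the most delicate piece and is where the power constraint~\eqref{inequ:betai} is invoked: I would bound the per-client drift by $\|\w_{t,k}^i-\w_t\|^2\leq(\eta\tau_t^i G)^2$ using Assumption~\ref{a_bounded}, then invert~\eqref{inequ:betai} as $\tau_t^i\leq P_t^i/(3\eta^2\beta_t^i G^2)$ to obtain $(\eta\tau_t^i G)^2\leq (P_t^i)^2/(9\eta^2(\beta_t^i)^2G^2)$; multiplying by the $L^2$ smoothness constant and the $2m\alpha_i^2$ prefactor from the Cauchy--Schwarz step yields precisely $\frac{2mL^2}{9\eta^2 G^2}\sum_i\alpha_i^2 P_i^2/\beta_i^2$ after time-averaging through the definition $1/\beta_i^2=\frac{1}{T}\sum_t 1/(\beta_t^i)^2$. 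Summing the per-round inequality over $t=0,\ldots,T-1$, telescoping via $F(\w_T)\geq F(\w_*)$, dividing by $T\eta$, and using $\min_t\mathbb{E}\|\nabla F(\w_t)\|^2\leq\frac{1}{T}\sum_t\mathbb{E}\|\nabla F(\w_t)\|^2$ delivers the stated bound. I expect the main obstacle to be the careful bookkeeping of nested expectations: $\tau_t^i$, $\beta_t^i$, the channel and the SGD samples are coupled through~\eqref{equ:betaiimp} and~\eqref{inequ:betai}, so each conditional expectation must be taken in the right order (SGD given channel, then channel) so that the $h_t^i/\hh_t^i$ factors in the statistical and channel-estimation terms are preserved exactly as stated rather than being prematurely upper-bounded, and so that the drift bound continues to hold uniformly in the realized $\tau_t^i$.
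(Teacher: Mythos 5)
Your proposal follows essentially the same route as the paper: the $L$-smooth descent lemma applied to the aggregated update $-\eta\sum_i \alpha_i \frac{h_t^i}{\hh_t^i}\frac{1}{\tau_t^i}\sum_k \nabla F_i(\w^i_{t,k},\xi^i_{t,k}) + \tilde{\z}_t$, the same split into the descent term, channel-noise term, SGD-variance term carrying the $\|h_t^i/\hh_t^i\|^2$ factor, and a residual that separates into client drift plus channel mismatch, with the drift controlled by inverting the power constraint~\eqref{inequ:betai} exactly as you describe. The constants and the time-averaging through $1/\beta_i^2$ and $1/\bar{\beta}^2$ match the paper's argument, so no substantive difference to report.
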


\begin{proof}[Proof Highlights] 
Early steps of the proof are similar to~\cite{mao22}.
The update rule of the global model is:
\begin{equation}
    \w_{t+1} - \w_{t} = \sum_{i=1}^{m} \frac{\beta_t^i}{\beta_t} h_t^i \left(\w^i_{t, \tau^i_t} - \w^i_{t, 0}\right) + \tilde{\z}_t.
\end{equation}
According to Assumption~\ref{a_smooth}, one-step loss function descent is:
\begin{multline}
    \mb{E}_t [F(\w_{t+1})] - F(\w_t) \leq \left< \nabla F(\w_t), \mb{E}_t \left[\w_{t+1} - \w_t \right] \right> \\
    + \frac{L}{2} \mb{E}_t \left[\| \w_{t+1} - \w_t \|^2 \right].
\end{multline}
Then, by expanding each term, decoupling the channel noise and using Cauchy-Schwartz inequality, we get:
\begin{multline}
    \mb{E}_t [F(\w_{t+1})] - F(\w_t) \leq - \frac{1}{2} \eta_t \| \nabla F(\w_t) \|^2  + \frac{L \sigma_c^2}{2\beta_t^2} + \\
    \frac{L \eta_t^2}{2} \sum_{i=1}^{m} \mb{E}_t \bigg\| \frac{\alpha_i}{\tau^i_t}\frac{h_t^i}{\hh_t^i}  \sum_{k=0}^{\tau^i_t-1} \left(\nabla F_i(\w^i_{t, k}, \xi^i_{t, k}) - \nabla F_i(\w^i_{t, k})\right) \bigg\|^2  \\
    + \frac{1}{2} \eta_t \mb{E}_t \bigg\| \sum_{i=1}^{m} \frac{\alpha_i}{\tau^i_t} \sum_{k=0}^{\tau^i_t-1} \left(\nabla F_i(\w_t) - \frac{h_t^i}{\hh_t^i} \nabla F_i(\w^i_{t, k})\right) \bigg\|^2
    \label{inequ:funcdes}
\end{multline}
Note that the learning process and channel estimation are independent, the third term on the right hand side in equation~\eqref{inequ:funcdes} can be bounded as $\frac{ L \eta_t^2}{2} \sum_{i=1}^m (\alpha_i)^2 \mb{E}_t  \bigg\| \frac{h_t^i}{\hh_t^i} \bigg\|^2 \sigma^2$.
We apply constraint~\eqref{inequ:betai} and Jensen's inequality to the last term of~\eqref{inequ:funcdes} and obtain its upper bound as 
\begin{equation}
    \frac{m L^2}{9 \eta_t G^2} \sum_{i=1}^m \left( \frac{\alpha_i P_t^i}{\beta_t^i}\right)^2 + \eta_t m \sum_{i=1}^m (\alpha_i)^2 \mb{E}_t \bigg\|1- \frac{h_t^i}{\hh_t^i}\bigg\|^2 G^2.
\end{equation}
Finally, by rearranging and defining $\eta, P_i, \frac{1}{\beta_i^2}, \frac{1}{\beta^2}$, we get the convergence upper bound.
\end{proof}

From Theorem~\ref{thm:convergence}, we see that there are five error types on the convergence upper bound: the FL optimization error, the channel noise error, the local update error due to dynamic power control coupled with data heterogeneity, the statistical error from local stochastic gradients and channel estimation error caused by imperfect CSI.
Note that in the perfect CSI scenario, i.e., $h_t^i = \hh_t^i$, the channel estimation error diminishes and the convergence analysis matches the result of~\cite{yang22}.
In the case of imperfect CSI, our design incorporates the RIS-assisted communication, providing an adjustable environment for OTA-FL to overcome deep fading scenarios.
With the joint phase and dynamic local step design, \alg can still achieve an excellent learning performance even when direct links are weak, unlike~\algc~\cite{mao22} which relies on direct links.

We further bound the statistical error and channel estimation error terms by analyzing $\frac{h_t^i}{\hat{h_t^i}}$ in Theorem~\ref{thm:convergence}. 
The channel estimation error is a small perturbation in practice.
Similar to \cite{zhu2020one}, we apply the Taylor expansion: $\frac{h_t^i}{\hh_t^i} = \frac{1}{1 + \frac{\Delta_t^i}{h_t^i}}  = 1 - \frac{\Delta_t^i}{h_t^i} + \mc{O}( (\frac{\Delta_t^i}{h_t^i})^2)$. 
We ignore the higher order terms and use the facts that CSI estimation errors are IID, and each RIS phase element has a unit norm to get the result below:

\begin{restatable}{corollary} {convergence_rate} \label{cor:convergence}
Let $|\Delta_t| \ll |h_t|, \forall t\in[T]$, $h_{UB,m} = \mathop{min}\limits_{t \in [T], i \in[m]}\{|h_{UB,t}^i|\}$, $h_{UR,a} = \mathop{max}\limits_{t \in [T], i \in[m], j \in [N]}\{|h_{UB,t,j}^i|\}$, $h_{RB,a} = \mathop{max}\limits_{t \in [T], j \in [N]}\{|h_{RB,t,j}|\}$, the convergence rate of {\alg} is bounded. The statistical error and channel estimation error are bounded by:
\begin{multline}
        L \eta \sigma^2 \frac{1}{T} \sum_{t=0}^{T-1} \sum_{i=1}^{m} \alpha_i^2 \mb{E}_t  \bigg\| \frac{h_t^i}{\hh_t^i} \bigg\|^2  \leq 
        L \eta \sigma^2 \sum_{i=1}^{m} \alpha_i^2 \left( 1 + C\right), \nonumber \\
        2 m G^2  \frac{1}{T} \sum_{t=0}^{T-1} \sum_{i=1}^m (\alpha_i)^2 \mb{E}_t  \bigg\| 1 - \frac{h_t^i}{\hh_t^i} \bigg\|^2 \nonumber \leq
        2 m G^2 \sum_{i=1}^m (\alpha_i)^2 C,
\end{multline}
where $C=\frac{\sigt_h^2 (1+N^2(h_{UR,a}^2+h_{RB,a}^2+\sigt_h^2))}{(h_{UB,m})^2}$. 
\end{restatable}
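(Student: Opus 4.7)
The plan is to exploit the first-order Taylor expansion already flagged in the statement of Corollary~\ref{cor:convergence}: writing $\Delta_t^i \triangleq \hh_t^i - h_t^i$ for the aggregated CSI error on the combined direct-plus-RIS link, we have $\frac{h_t^i}{\hh_t^i} = \frac{1}{1+\Delta_t^i/h_t^i} \approx 1 - \frac{\Delta_t^i}{h_t^i}$. Dropping the $\mc{O}((\Delta_t^i/h_t^i)^2)$ remainder, both quantities in the corollary collapse to moments of $\Delta_t^i/h_t^i$: clearly $\mb{E}_t \|1 - h_t^i/\hh_t^i\|^2 \approx \mb{E}_t \|\Delta_t^i/h_t^i\|^2$, and since the four constituents of $\Delta_t^i$ all have zero mean (the three linear ones by the zero-mean assumption on each path's CSI error, and the bilinear cross term by independence of the two RIS-side errors), $\mb{E}_t \|h_t^i/\hh_t^i\|^2 \approx 1 + \mb{E}_t \|\Delta_t^i/h_t^i\|^2$. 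So the entire task reduces to producing a uniform bound $\mb{E}_t \|\Delta_t^i/h_t^i\|^2 \leq C$.

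Next I would decompose $\Delta_t^i$ path-wise. Plugging the perturbed channels into~\eqref{equ:csiest} yields
\[
\Delta_t^i = \Delta_{UB,t}^i + (\Delta_{UR,t}^i)^H \The_t \h_{RB,t} + (\h_{UR,t}^i)^H \The_t \Delta_{RB,t} + (\Delta_{UR,t}^i)^H \The_t \Delta_{RB,t},
\]
and expanding $|\Delta_t^i|^2$ the independence of the three CSI error vectors kills every cross-term in expectation (odd-order moments in $\Delta_{UR}$ or $\Delta_{RB}$ vanish), leaving four nonnegative pieces. The direct-link piece contributes $\sigt_h^2$. For each RIS-cascaded piece I would apply Cauchy--Schwarz together with $|\theta_{n,t}|=1$, which gives $|\u^H \The_t \v|^2 \leq \|\u\|^2 \|\v\|^2$, and then use the per-element magnitude bounds $\|\h_{UR,t}^i\|^2 \leq N h_{UR,a}^2$, $\|\h_{RB,t}\|^2 \leq N h_{RB,a}^2$, and $\mb{E}\|\Delta_{UR,t}^i\|^2 = \mb{E}\|\Delta_{RB,t}\|^2 = N\sigt_h^2$. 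Adding the four pieces gives
\[
\mb{E}_t |\Delta_t^i|^2 \leq \sigt_h^2 \bigl(1 + N^2(h_{UR,a}^2 + h_{RB,a}^2 + \sigt_h^2)\bigr),
\]
the $N^2$ being the direct fingerprint of Cauchy--Schwarz on two $N$-dimensional vectors. Dividing by $|h_t^i|^2 \geq h_{UB,m}^2$ yields $\mb{E}_t \|\Delta_t^i/h_t^i\|^2 \leq C$; substituting this back into the two time-averaged error terms in Theorem~\ref{thm:convergence} and pulling the uniform bound out of the average delivers the two inequalities claimed in the corollary.

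The main obstacle I anticipate is the denominator bound $|h_t^i| \geq h_{UB,m}$: in principle the RIS-reflected term could destructively interfere with the direct link, so this step is the loose end and tacitly leans on the fact that the phase design~\eqref{prob:phase} steers $(\g_t^i)^H \theb_t$ constructively with respect to $h_{UB,t}^i$. A smaller finicky point is confirming that the only surviving quartic contribution to $\mb{E}_t|\Delta_t^i|^2$ is the purely bilinear cross term $(\Delta_{UR,t}^i)^H\The_t\Delta_{RB,t}$ (producing the $\sigt_h^4$ term in $C$) and not any correlation between it and the linear pieces; this follows cleanly from the independence and zero mean of the three CSI error vectors. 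Everything else is routine algebra.
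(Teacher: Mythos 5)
Your proposal follows essentially the same route as the paper: a first-order Taylor expansion of $h_t^i/\hh_t^i$ with the higher-order remainder dropped, a path-wise decomposition of the aggregated error using the i.i.d.\ zero-mean CSI errors to kill cross terms, Cauchy--Schwarz with $|\theta_{n,t}|=1$ to produce the $N^2$ factors, and a lower bound on $|h_t^i|$ by $h_{UB,m}$ --- arriving at exactly the constant $C$ in the corollary. The loose end you flag (that bounding $|h_t^i|$ below by the direct-link minimum tacitly assumes the RIS term does not destructively interfere) is equally present, and equally unaddressed, in the paper's own argument.
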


Different from \cite{mao22}, the impact of accumulated channel estimation error term now depends on the number of RIS elements $N$.
This is intuitively pleasing, as more RIS elements result in more estimated channel paths. 
As a result, the bounds in Corollary~\ref{cor:convergence} increase with $N$. 
However, these bounds are not dominant in the overall convergence upper bound because in $C$, $N$ is coupled with estimation variance, the overall influence will be much smaller than the local update error term.
We also illustrate this effect in the numerical results later.

\vspace{0.1in}
\section{Numerical Results} \label{sec: exp}
\vspace{0.1in}
We simulate a RIS-assisted multiuser learning system. 
There are $m=10$ clients and the RIS is equipped with $N=16$ elements.
The system setup follows from~\cite{liu2021risfl}, where a 3D coordinate system is considered for the locations. Specifically, the PS is located at $(-50,0,10)$ meters, the RIS is placed at $(0,0,10)$ meters. The users are uniformly distributed in the x-y plane within the range of $[-20,0]$ meters in the x dimension and the range of $[-30,30]$ meters in the y dimension. 
We consider a channel model where the small-scale fading is i.i.d. Gaussian. We adopt the path loss model from~\cite{tang2020ris}.
The path loss of user-PS direct link is $ G_{PS}G_{U}\left(\frac{3*10^8 m/s}{4 \pi f_c d_{UP}} \right)^{PL}$, where $G_{PS}=5$dBi, $G_U=0$dBi are antenna gain at the PS and user, respectively; $f_c=915$MHz is the carrier frequency; $d_{UP}$ is the distance between user and PS; $PL$ is the path loss exponent.
We set $PL=4$ to simulate weak direct links.
The path loss of RIS assisted link is $G_{PS}G_{U} G_{RIS} \frac{N^2 d_x d_y ((3*10^8 m/s)/f_c)^2 }{64 \pi^3 d_{RP}^2 d_{UR}^2}$, where $G_{RIS}=5$dBi is the RIS antenna gain; $d_x=d_y=(3*10^7 m/s)/f_c$ are the horizontal and vertical size of a RIS element; $d_{RP}, d_{UR}$ are distances between RIS and PS, user and RIS, respectively. 
We simulate the channel estimation error as a Gaussian random variable with variance $\sigt_h^2= 0.1 \sigma_c^2$.
The maximum SNR is set to \unit[$20$]{dB}.

We consider an image classification task by logistic regression on the MNIST dataset~\cite{lecun1998gradient}. 
We focus on imperfect CSI and the extremely heterogeneous case of non-i.i.d. data distribution. That is, each client contains a local dataset with only {\it one} class.
We compare~\alg with two baselines:
\vspace{0.05in}
\begin{list}{\labelitemi}{\leftmargin=1em \itemindent=-0.5em \itemsep=.2em}
\item Baseline 1: No RIS, only direct links exist (\algc).
\item Baseline 2: Algorithm from~\cite{liu2021risfl}.
\end{list}
\vspace{0.05in}
\begin{figure}[t] 
    \centering
    \includegraphics[scale=0.325]{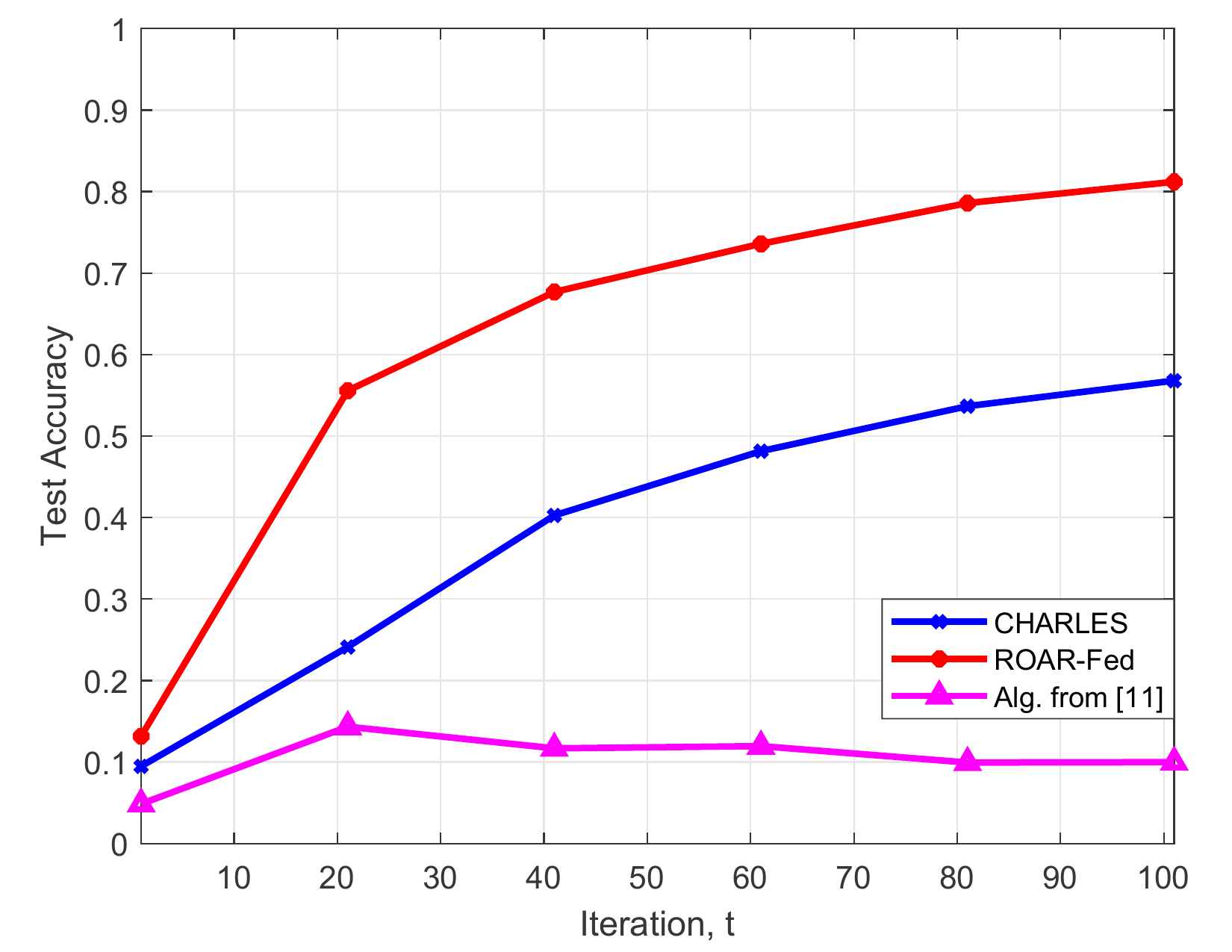}
    \caption{Test accuracy.}
    \label{fig:result1}
\end{figure}
\begin{figure}[t] 
    \centering
    \includegraphics[scale=0.325]{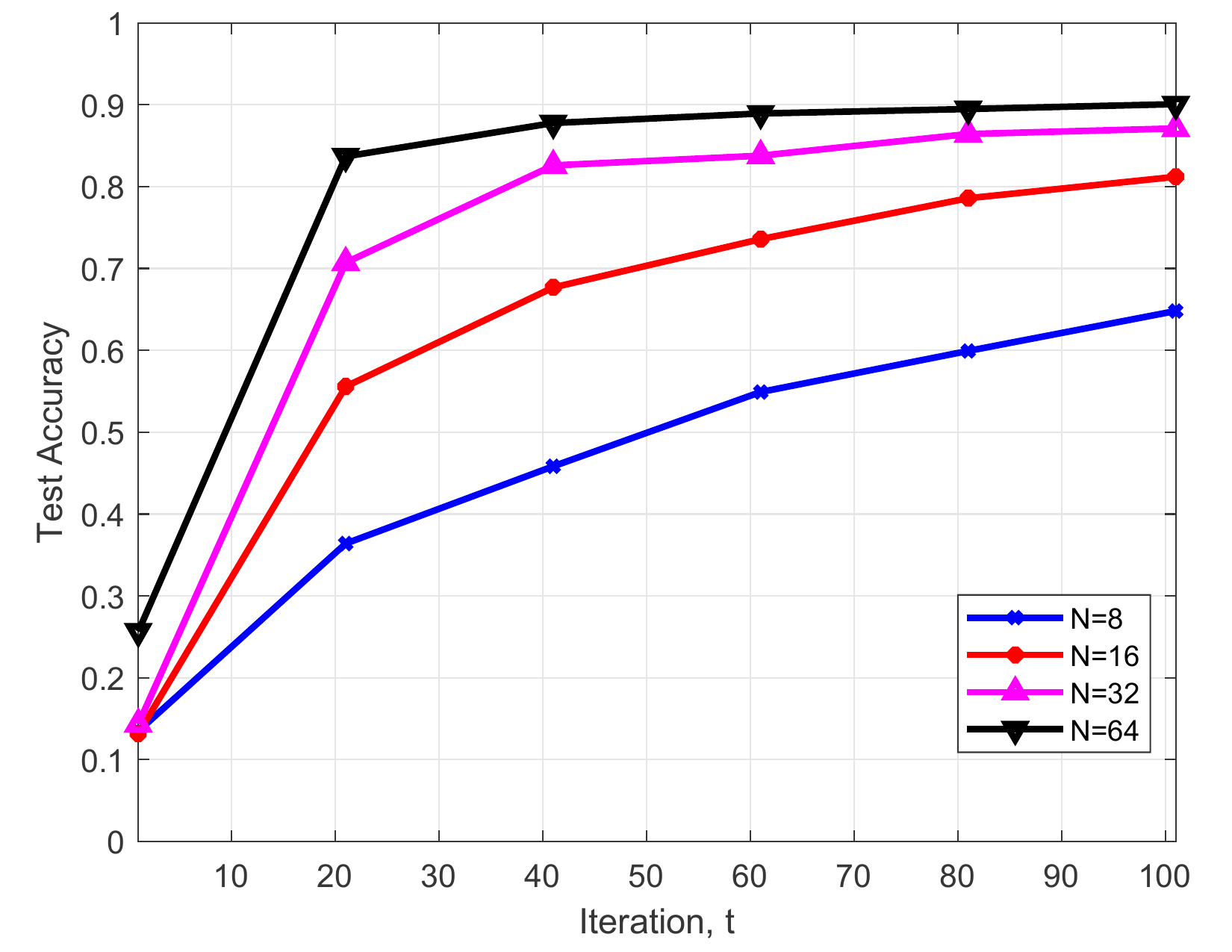}
    \caption{Test accuracy versus the number of RIS elements.}
    \label{fig:result2}
\end{figure}

In Fig.~\ref{fig:result1}, the test accuracy versus the global training round is shown.
\alg outperforms~\algc, which illustrates the merit of deploying RIS.
Under the imperfect CSI, our algorithm achieves excellent test accuracy and outperforms the state-of-the-art algorithm from~\cite{liu2021risfl} which fails to converge. 
This verifies the effectiveness and robustness of the proposed joint adaptive communication and learning design.
We may interpret that~\alg also achieves some level of fairness, in a way that for each client, the coupled adaptive RIS phase and local steps design ensures a better wireless environment and training performance for all clients equally. 
Note that in our non-i.i.d. data setting, more local updates result in a more biased global model towards that client.

In Fig.~\ref{fig:result2}, we evaluate the impact of the number of RIS elements $N$. 
We observe that with the increase of $N$, both the test accuracy and the convergence speed increase.
This matches our theoretical convergence analysis in Sec.~\ref{sec: conv}.
It is important to reiterate that when $N$ increases, the channel estimation error grows as discussed in Corollary~\ref{cor:convergence}. However, this term is not dominant in the convergence upper bound.
The local update error term is dominant, which will decrease with increasing $N$, because the better radio environment facilitated by the larger RIS will lead to fewer local steps and a larger power control factor overall. 

\vspace{0.1in}
\section{Conclusion} 
\label{sec: conclusion}
\vspace{0.1in}
In this paper, we have considered an RIS-assisted wireless edge network that employs over-the-air federated learning. We have proposed a new adaptive OTA-FL approach.
The proposed algorithm brings a cross-layer perspective in that, it jointly optimizes communication and computation resources, in particular the number of local computation steps, coupled with RIS phase design, in order to optimize a general non-convex learning objective with non-i.i.d. client distributions, and estimated (imperfect) CSI at the clients. 
We have proved the convergence of~\alg and discussed the impact of RIS with respect to the number of reflecting elements. We have illustrated the effectiveness and robustness of~\alg under heterogeneous data distributions and imperfect CSI.
Future work in this direction includes the impact of noisy downlink, programming the environment in both uplink and downlink utilizing one or more RISs.

\bibliographystyle{IEEEtran}{}
\bibliography{BIB/Optimization,BIB/Relay, BIB/RIS, BIB/ImperfectCSI, BIB/Yener, BIB/RISFL, BIB/Experiments, BIB/Introduction}


\end{document}